\newtheorem{theorem}{Theorem}[section]
\newtheorem{lemma}[theorem]{Lemma}
\newtheorem{assumption}[theorem]{Assumptions}
\numberwithin{equation}{section}
\newcommand{\ii}{\infty}
\newcommand\R{{\ensuremath {\mathbb R} }}
\newcommand\C{{\ensuremath {\mathbb C} }}
\newcommand\N{{\ensuremath {\mathbb N} }}
\renewcommand\phi{\varphi}
\newcommand{\gH}{\mathfrak{H}}
\newcommand{\gS}{\mathfrak{S}}
\newcommand{\wto}{\rightharpoonup}
\newcommand{\cS}{\mathcal{S}}
\newcommand{\cE}{\mathcal{E}}
\newcommand{\cF}{\mathcal{F}}
\newcommand{\F}{\mathcal{F}}
\renewcommand{\epsilon}{\varepsilon}
\DeclareMathOperator{\tr}{{\rm Tr}}
\renewcommand{\geq}{\geqslant}
\renewcommand{\leq}{\leqslant}
\newcommand{\FNL}{F_{\mathrm{NL}}}
\newcommand{\MFf}{\cF ^{\rm mf}}
\newcommand{\MFe}{ F^{\rm mf}}
\newcommand{\MFgam}{ \gamma_{\rm mf}}
\title[]{Bose gases at positive temperature and non-linear Gibbs measures}
\author[M. Lewin]{Mathieu LEWIN}
\address{CNRS \& Universit\'e Paris-Dauphine, CEREMADE (UMR 7534), Place de Lattre de Tassigny, F-75775 PARIS Cedex 16, France} 
\email{mathieu.lewin@math.cnrs.fr}
\author[P.~T. Nam]{Phan Th\`anh NAM}
\address{IST Austria, Am Campus 1, 3400 Klosterneuburg, Austria} 
\email{pnam@ist.ac.at}
\author[N. Rougerie]{Nicolas ROUGERIE}
\address{Universit\'e Grenoble 1 \& CNRS,  LPMMC (UMR 5493), B.P. 166, F-38042 Grenoble, France}
\email{nicolas.rougerie@grenoble.cnrs.fr}
\date{February 2016}
\begin{document}

\begin{abstract}
We summarize recent results on positive temperature equilibrium states of large bosonic systems. The emphasis will be on the connection between bosonic grand-canonical thermal states and the (semi-) classical Gibbs measures on one-body quantum states built using the corresponding mean-field energy functionals. An illustrative comparison with the case of ``distinguishable'' particles (boltzons) is provided. 
\end{abstract}

\maketitle

\setcounter{tocdepth}{2}
\tableofcontents

\section{Bose Einstein condensation in trapped atomic gases}

The year 1995 saw the first observation of Bose-Einstein condensation (BEC) in ultra-cold dilute alkali vapors, 70 years after the phenomenon had been theoretically predicted. This is of course a major triumph of condensed matter physics, celebrated by the 2001 Nobel prize~\cite{CorWie-nobel,Ketterle-nobel}. Twenty years after this achievement, many important questions raised by the experiments remain unresolved, in particular as regards the rigorous derivation of BEC from the first principles of quantum mechanics, i.e. the many-body Schr\"odinger equation.

Full Bose-Einstein condensation is the phenomenon that, below a certain critical temperature $T_c$, (almost) all particles of a bosonic system must reside in a single quantum state of low energy. Ideally one would like to prove the existence of such a temperature and provide an estimate thereof. For an interacting gas this has so far remained out of reach. In particular, for the homogeneous Bose gas, there is still no proof of Bose-Einstein condensation in the thermodynamic limit, even in the ground state. However, this might not be the main question of interest for the description of cold atoms experiments. Indeed, those are performed in magneto-optic traps, which set a fixed length scale to the problem. The gases in which BEC is observed are thus \emph{not} homogeneous, and the thermodynamic limit is not the most physically relevant regime in this context. 

A lot of progress has been achieved in recent years by considering different scaling regimes, more adapted to the case of inhomogeneous systems, e.g. the mean-field and the Gross-Pitaevskii limits. For trapped systems, BEC in the ground state and its propagation by the many-body Schr\"odinger equation is now fairly well understood (see~\cite{BenPorSch-15,Golse-13,Lewin-ICMP,LieSeiSolYng-05,Rougerie-LMU,Rougerie-cdf,Schlein-08,Seiringer-ICMP10} for reviews). Even in these somewhat more wieldy regimes, very little is known about positive temperature equilibrium states. In particular, an estimate of the critical temperature (or, more generally, temperature regime) is lacking.

In this note, we discuss some of our recent results~\cite{LewNamRou-14d}, in the perspective of the previously mentioned issues. The main idea is to relate, in a certain limit, Gibbs states of large bosonic systems to non-linear Gibbs measures built on the associated mean-field functionals. This is a semi-classical method, where the BEC phenomenon can be recast in the context of a classical field theory. Although, as far as the study of BEC is concerned, our results are rather partial, it is our hope that our methods might in the future help to shed some light on the physics we just discussed. We shall compare the new results to what can be proved in the case of ``boltzons'', i.e. particles with no imposed symmetry, in order to illustrate the crucial importance of Bose statistics in this problem. 

We refer to~\cite{Rougerie-xedp15} for another look at the main results of~\cite{LewNamRou-14d}, with an emphasis on their relation to constructive quantum field theory~\cite{DerGer-13,GliJaf-87,LorHirBet-11,Simon-74,Summers-12,VelWig-73} and the use of invariant measures in the study of non-linear dispersive partial differential equations~\cite{LebRosSpe-88,Bourgain-94,Bourgain-96,Bourgain-97,Tzvetkov-08,BurTzv-08,BurThoTzv-10,ThoTzv-10,CacSuz-14}.  

\vspace{-0.1cm}

\section{From bosonic grand-canonical Gibbs states to non-linear Gibbs measures}

\subsection{Setting}

We consider $N$ bosons living in $\R^d$ and work in the grand-canonical ensemble. Let thus $\gH = L ^2 (\R ^d)$, 
$$\gH ^N = \bigotimes_{\rm sym} ^N \gH \simeq L_{\rm sym} ^2 (\R^{Nd})$$
be the symmetric $N$-fold tensor product appropriate for bosons and
\begin{align*}
\cF &= \C \oplus \gH \oplus \gH ^{2} \oplus \ldots \oplus \gH ^{N} \oplus \ldots\\
\cF &= \C \oplus L ^2 (\R ^d) \oplus L_s ^2 (\R ^{2d})  \oplus \ldots \oplus L_s ^2 (\R ^{Nd}) \oplus \ldots 
\end{align*}
be the bosonic Fock space. We are interested in the positive temperature equilibrium states of the second-quantized Hamiltonian $\mathbb{H}_\lambda$ defined as 
$$\mathbb{H}_\lambda = \mathbb{H}_0 + \lambda \mathbb{W} = \bigoplus_{n=1} ^\infty H_{n,\lambda},$$
with
\begin{align*}
\mathbb{H}_0 &= \bigoplus_{n= 1} ^\infty \left( \sum_{j=1} ^n h_j\right) = \bigoplus_{n= 1} ^\infty \left( \sum_{j=1} ^n -\Delta_j + V (x_j) - \nu \right)\\
\mathbb{W} &= \bigoplus_{n= 2} ^\infty \left( \sum_{1 \leq i<j \leq n}  w_{ij} \right).
\end{align*}
Here $\nu$ is a chemical potential, $V$ is a trapping potential, i.e.
$$ V(x) \to + \infty \mbox{ when } |x| \to \infty$$
and $w$ is a \emph{positive}, symmetric, self-adjoint operator on $\gH ^2$. The methods of~\cite{LewNamRou-14d} are limited to rather smooth repulsive interactions, thus $w$ will in general \emph{not} be a multiplication operator. We shall comment on this issue below but, for the time being, think of a finite-rank $w$ with smooth eigenvectors, corresponding to a regularization of a physical interaction. 

We are interested in the asymptotic behavior of the grand-canonical Gibbs state at temperature $T$
\begin{equation}\label{eq:GC Gibbs}
 \Gamma_{\lambda,T} = \frac{1}{Z_\lambda (T)}\exp\left( - \frac1T \mathbb{H}_\lambda\right). 
\end{equation}
The partition function $Z_\lambda (T)$ fixes the trace equal to $1$ and satisfies 
\begin{equation}\label{eq:GC part}
 - T \log Z_\lambda (T)  = F_\lambda (T) 
\end{equation}
where $F_\lambda (T)$ is the infimum of the free energy functional 
\begin{equation}\label{eq:free ener GC}
\F_{\lambda} [\Gamma] = \tr_{\cF} \left[ \mathbb{H}_\lambda  \Gamma \right] + T \tr_{\cF} \left[ \Gamma \log \Gamma \right] 
\end{equation}
over all grand-canonical states (trace-class self-adjoint operators on $\cF$ with trace~$1$). It turns out that an interesting limiting behavior emerges in the regime 
\begin{equation}\label{eq:regime}
 T \to \infty, \quad \lambda = \frac{1}{T}, \quad \nu \mbox{ fixed,} 
\end{equation}
provided one makes the following assumptions:

\begin{assumption}[\textbf{One-body hamiltonian}]\label{asum h}\mbox{}\\
We assume that, as an operator on $\gH = L^2 (\R ^d)$, 
$$ h:= -\Delta + V - \nu > 0$$ 
and that there exists  $p>0$ such that 
\begin{equation}\label{eq:asum trace}
\tr_{\gH} h ^{-p} < \infty. 
\end{equation}
\end{assumption}

Note that one can always find a $p$ such that~\eqref{eq:asum trace} holds. The easiest case, for which our results are the most satisfying, is that where one can take $p=1$ (refered to as the trace-class case). This happens only in 1D and if the trapping potential grows sufficiently fast at infinity, i.e.
\begin{equation}\label{eq:anharm oscill}
 h = -\frac{d^2}{dx^2} + |x| ^a - \nu
\end{equation}
with $\nu$ small enough and $a>2$.

\begin{assumption}[\textbf{Interaction term}]\label{asum w}\mbox{}\\
We pick $w$ a positive self-adjoint operator on $\gH ^2$ and distinguish two cases: 
\begin{itemize}
 \item either one can take $p=1$ in~\eqref{eq:asum trace} and then we assume 
\begin{equation}\label{eq:asum w 1}
 \tr_{\gH ^2} \left[ w\, h ^{-1} \otimes h ^{-1} \right] < \infty
\end{equation}
\item or $p>1$ in~\eqref{eq:asum trace} and we make the stronger assumption that
\begin{equation}\label{eq:asum w 2}
 0 \leq w \leq h ^{1-p'} \otimes h ^{1-p'}
\end{equation}
for some $p'>p$.
\end{itemize}
\end{assumption}

In essence, these (rather restrictive) asumptions correspond to asking that the non-interacting Gibbs state has a well-controled interaction energy. Indeed, one can compute that its two-body density matrix behaves as $T^2 h ^{-1} \otimes h ^{-1}$ in the limit $T\to \infty$. In the 1D case~\eqref{eq:anharm oscill} where $p=1$, we can take for example $w$ a multiplication operator, e.g. by a bounded function $w(x-y)$.  The assumption we make when $p>1$ does not cover such operators.

\subsection{Non-linear Gibbs measures}

The natural limiting object in the setting we just described turns out to be the non-linear Gibbs measure on one-body quantum states given formally by 
 \begin{equation}
d\mu(u)=\frac{1}{Z}e^{-\cE[u]}\,du,
\label{eq:mu_intro}
 \end{equation}
where $Z$ is a partition function and $\cE [u]$ is the mean-field energy functional
\begin{equation}\label{eq:MF func}
\cE [u] := \left\langle u | -\Delta + V - \nu | u \right\rangle_{\gH} + \frac{1}{2} \left\langle u \otimes u | w | u \otimes u \right\rangle_{\gH^2}.
\end{equation}
The rigorous meaning of~\eqref{eq:mu_intro} is given by the two following standard lemmas/definitions:

\begin{lemma}[\textbf{Free Gibbs measure}]\label{lem:free}\mbox{}\\
Write
$$ h = -\Delta + V - \nu = \sum_{j = 1} ^{\infty} \lambda_j |u_j\rangle \langle u_j|$$
and define the asociated scale of Sobolev-like spaces 
\begin{equation}
\gH^s:=D(h^{s/2})=\bigg\{u=\sum_{j\geq1} \alpha_j\,u_j\ :\ \|u\|_{\gH^{s}}^2:=\sum_{j\geq1}\lambda_j^s|\alpha_j|^2<\ii\bigg\}.
\end{equation}
Define a finite dimensional measure on $\mathrm{span} (u_1,\ldots,u_K)$ by setting
$$d\mu_0 ^K (u) := \bigotimes_{j=1} ^K \frac{\lambda_j}{\pi} \exp\left( -\lambda_j |\langle u, u_j\rangle | ^2\right) d \langle u, u_j\rangle$$
where $d\langle u, u_j\rangle = da_j db_j$ and $a_j,b_j$ are the real and imaginary parts of the scalar product.

Let $p>0$ be such that~\eqref{eq:asum trace} holds. Then there exists a unique measure $\mu_0$ over the space $\gH ^{1-p}$ such that, for all $K >0$, the above finite dimensional measure $\mu_{0,K}$ is the cylindrical projection of $\mu_0$ on $\mathrm{span} (u_1,\ldots,u_K)$. 
Moreover 
\begin{equation}\label{eq:DM free meas}
\gamma_0^{(k)}:=\int_{\gH^{1-p}} |u^{\otimes k}\rangle\langle u^{\otimes k}|\;d\mu_0(u) = k!\,(h^{-1})^{\otimes k} 
\end{equation}
where this is seen as an operator acting on $\bigotimes_{\rm sym} ^k \gH$.
\end{lemma}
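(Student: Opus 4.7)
The plan is to construct $\mu_0$ as the projective limit of the Gaussian cylindrical measures $\mu_0^K$ and then to identify the correct space of concentration via a single second-moment estimate.

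First, I would verify the compatibility condition needed for Kolmogorov's extension theorem. Since $\mu_0^K$ is a product of $K$ independent complex Gaussians in the coordinates $\alpha_j=\langle u,u_j\rangle$ with $\mathbb{E}_{\mu_0^K}|\alpha_j|^2=\lambda_j^{-1}$, integrating out the coordinate $\alpha_{K+1}$ in $\mu_0^{K+1}$ returns exactly $\mu_0^K$. Kolmogorov's theorem applied to the sequence space $\C^\N$ with the product $\sigma$-algebra then yields a unique probability measure $\widetilde\mu_0$ whose $K$-dimensional cylindrical projections are the $\mu_0^K$.

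Second, I would show that $\widetilde\mu_0$ is supported on $\gH^{1-p}$. By monotone convergence and independence,
\begin{equation*}
\int \|u\|_{\gH^{1-p}}^2\, d\widetilde\mu_0(u) \;=\; \sum_{j\geq 1}\lambda_j^{1-p}\cdot\lambda_j^{-1} \;=\; \tr_{\gH} h^{-p} \;<\; \infty,
\end{equation*}
so $\widetilde\mu_0$-almost every sequence of coefficients defines an element of $\gH^{1-p}$. Restricting $\widetilde\mu_0$ to this set of full measure yields the desired probability $\mu_0$ on $\gH^{1-p}$, and uniqueness follows because the Borel $\sigma$-algebra of the separable Hilbert space $\gH^{1-p}$ is generated by the cylinder sets on which $\mu_0$ is prescribed by the $\mu_0^K$.

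Third, I would compute $\gamma_0^{(k)}$ by Wick's theorem (Isserlis formula) for complex Gaussians. The only non-vanishing second moments are $\mathbb{E}_{\mu_0}[\bar\alpha_j\alpha_i]=\lambda_j^{-1}\delta_{ij}$, so expanding $u^{\otimes k}$ in the basis $u_{j_1}\otimes\cdots\otimes u_{j_k}$ gives
\begin{equation*}
\int \bar\alpha_{j_1}\cdots\bar\alpha_{j_k}\,\alpha_{i_1}\cdots\alpha_{i_k}\, d\mu_0 \;=\; \sum_{\sigma\in S_k}\prod_{\ell=1}^k \lambda_{j_\ell}^{-1}\,\delta_{j_\ell,\,i_{\sigma(\ell)}}.
\end{equation*}
Each of the $k!$ Wick pairings realizes $(h^{-1})^{\otimes k}$ composed with a permutation of the tensor factors; since $(h^{-1})^{\otimes k}$ commutes with such permutations and the operator is viewed on $\bigotimes_{\rm sym}^k \gH$, all pairings act identically and one obtains $\gamma_0^{(k)}=k!\,(h^{-1})^{\otimes k}$.

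The main obstacle is not conceptual but technical: the infinite-dimensional moment identity must be obtained as a limit of its finite-dimensional truncations, which requires a uniform integrable majorant. When $p=1$, the truncated $\gamma_0^{(k)}$ are trace-class with $\tr\gamma_0^{(k)}=k!\,(\tr h^{-1})^k$ uniformly in $K$, so passage to the limit is straightforward in trace norm. When $p>1$, $h^{-1}$ lies only in the Schatten class $\gS^p$, and the limit must be taken against trace-class test operators; the tightness estimate from the second step, together with the stronger assumption~\eqref{eq:asum w 2}, provides the required uniform bounds on the truncated density matrices.
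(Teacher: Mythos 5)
Your route --- Kolmogorov extension of the product Gaussians on $\C^{\N}$, the second-moment identity $\int \|u\|_{\gH^{1-p}}^2\,d\mu_0 = \sum_j \lambda_j^{-p} = \tr_\gH h^{-p}$ to locate the support, and the complex Wick theorem giving $\sum_{\sigma\in S_k}(h^{-1})^{\otimes k}U_\sigma = k!\,(h^{-1})^{\otimes k}$ on $\bigotimes_{\rm sym}^k\gH$ --- is precisely the standard construction; the present paper states the lemma without proof as standard (the details are in the companion paper~\cite{LewNamRou-14d}), and your first two steps and the Wick computation are correct, including the uniqueness argument via cylinder sets generating the Borel $\sigma$-algebra of the separable space $\gH^{1-p}$.

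There is, however, a genuine flaw in your final paragraph. Assumption~\eqref{eq:asum w 2} concerns the two-body interaction $w$ and plays no role whatsoever in the moments of the \emph{free} measure $\mu_0$; it cannot ``provide the required uniform bounds on the truncated density matrices''. What actually handles the case $p>1$ is the Gaussian structure itself: every matrix element $\langle u_{i_1}\otimes\cdots\otimes u_{i_k},\,\gamma_0^{(k)}\, u_{j_1}\otimes\cdots\otimes u_{j_k}\rangle$ involves only finitely many coordinates $\alpha_j$, so it is computed \emph{exactly} by your moment formula with no truncation limit to justify, and the identity~\eqref{eq:DM free meas} then extends by density once one knows $\int \|u\|_{\gH^{1-p}}^{2k}\,d\mu_0 <\infty$ (immediate from Wick or Fernique), which makes $|u^{\otimes k}\rangle\langle u^{\otimes k}|$ well defined as a form; the resulting operator $k!\,(h^{-1})^{\otimes k}$ belongs to $\gS^p$, not to the trace class, consistent with the weak interpretation. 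A second, smaller slip: your trace formula $\tr \gamma_0^{(k)} = k!\,(\tr h^{-1})^k$ for $p=1$ is false as an equality --- on $\bigotimes_{\rm sym}^k \gH$ one has $\tr\big[k!\,(h^{-1})^{\otimes k}\big] = \sum_{\sigma\in S_k}\prod_{c\ \mathrm{cycle\ of}\ \sigma} \tr\big[h^{-|c|}\big]$, which is merely \emph{bounded above} by $k!\,(\tr h^{-1})^k$; since your dominated-convergence argument only needs the upper bound, the proof survives, but the statement should be corrected.
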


Note that the free Gibbs measure \emph{never} lives on the energy space $\gH ^1$. It lives on the original Hilbert space $\gH ^0 = \gH$ if and only if~\eqref{eq:asum trace} holds with $p=1$.  We can now define the interacting Gibbs measure as being absolutely continuous with respect to the free Gibbs measure: 

\begin{lemma}[\textbf{Interacting Gibbs measure}]\label{lem:inter}\mbox{}\\
Let 
$$ \FNL [u] := \frac{1}{2}\left\langle u \otimes u | w | u \otimes u \right\rangle_{\gH^2}.$$
If Assumptions~\ref{asum w} hold we have that $ u \mapsto  \FNL [u] $ is in $L ^1 (\gH ^{1-p},d\mu_0)$. In particular 
\begin{equation}\label{eq:int meas}
\mu (du) = \frac{1}{Z_r} \exp\left( - \FNL [u] \right) \mu_0 (du)  
\end{equation}
makes sense as a probability measure over $\gH ^{1-p}$. That is, the relative partition function satisfies 
$$Z_r = \int \exp\left( - \FNL [u] \right) \mu_0 (du) >0  .$$
\end{lemma}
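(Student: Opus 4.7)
The plan is to establish $L^1$-integrability of $\FNL$ against $\mu_0$ by computing its integral explicitly via the second-moment formula~\eqref{eq:DM free meas}, and then to deduce the well-definedness of $\mu$ and positivity of $Z_r$ from elementary arguments. Since $w\geq 0$ forces $\FNL\geq 0$, the only real point is to bound $\int \FNL \, d\mu_0$ from above.

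The identity one aims at is
\begin{equation*}
\int_{\gH^{1-p}} \FNL[u]\, d\mu_0(u) \;=\; \tfrac{1}{2}\tr_{\gH^2}\bigl[w\, \gamma_0^{(2)}\bigr] \;=\; \tr_{\gH^2}\bigl[w\,(h^{-1}\otimes h^{-1})\bigr],
\end{equation*}
which follows, at least formally, from~\eqref{eq:DM free meas} applied with $k=2$. In the trace-class regime $p=1$ the right-hand side is finite precisely by the assumption~\eqref{eq:asum w 1}. In the case $p>1$, cyclicity of the trace combined with the operator bound~\eqref{eq:asum w 2} and the fact that tensor products of functions of $h$ commute gives
\begin{equation*}
\tr\bigl[w\,(h^{-1}\otimes h^{-1})\bigr] = \tr\bigl[(h^{-1/2}\otimes h^{-1/2})\, w\, (h^{-1/2}\otimes h^{-1/2})\bigr] \leq \tr\bigl[h^{-p'}\otimes h^{-p'}\bigr] = \bigl(\tr h^{-p'}\bigr)^2,
\end{equation*}
which is finite since $p'>p$ and the eigenvalues of $h$ are bounded below away from zero (so $\tr h^{-p}<\infty$ implies $\tr h^{-p'}<\infty$).

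The main obstacle is the rigorous justification of the displayed identity, especially in the case $p>1$ where $\mu_0$ is supported on the rough space $\gH^{1-p}$ and $u\otimes u$ does not \emph{a priori} lie in $\gH^2$. I would proceed by truncation: on the finite-dimensional subspace $V_K=\mathrm{span}(u_1,\ldots,u_K)$ the cylindrical projection $\mu_{0,K}$ is a genuine complex Gaussian, so the formula for $\int \FNL[P_K u]\, d\mu_{0,K}$ is a direct Wick computation and yields $\tr\bigl[w\, (P_K h^{-1} P_K)^{\otimes 2}\bigr]$, where $P_K$ is the orthogonal projector onto $V_K$. Since $w\geq 0$ and $(P_K h^{-1}P_K)^{\otimes 2}$ increases monotonically to $(h^{-1})^{\otimes 2}$ as $K\to\infty$, monotone convergence for the trace gives the identity in the limit. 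This simultaneously defines $\FNL[u]$ on $\gH^{1-p}$ as the pointwise monotone limit of $\FNL[P_K u]$, with the uniform $K$-bound on the integrals guaranteeing finiteness $\mu_0$-almost everywhere.

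Once $\FNL\in L^1(d\mu_0)$ is established, one has $\FNL<\infty$ $\mu_0$-almost everywhere, hence $0< e^{-\FNL}\leq 1$ almost everywhere, and therefore $0<Z_r\leq 1$. Formula~\eqref{eq:int meas} then defines the absolutely continuous probability measure $\mu$ on $\gH^{1-p}$, as required.
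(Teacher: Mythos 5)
A preliminary remark: this note states Lemma~\ref{lem:inter} without proof (it is quoted as standard, the detailed construction being in~\cite{LewNamRou-14d}), so your attempt is measured against that reference, which indeed proceeds along the lines you propose: truncate to $\mathrm{span}(u_1,\dots,u_K)$, compute Gaussian moments via~\eqref{eq:DM free meas}, and pass to the limit. The pieces you state are correct: $\int \FNL[P_K u]\,d\mu_0 = \tr_{\gH^2}\bigl[w\,(h^{-1}P_K)^{\otimes 2}\bigr]$ (with $P_K$ the spectral projection onto $\mathrm{span}(u_1,\dots,u_K)$), the monotone convergence of these \emph{traces} to $\tr_{\gH^2}\bigl[w\, h^{-1}\otimes h^{-1}\bigr]$, the bound by $(\tr h^{-p'})^2 < \infty$ under~\eqref{eq:asum w 2} (using that $h$ has discrete spectrum bounded below by $\lambda_1>0$), and the endgame $0 < Z_r \leq 1$ once $\FNL \in L^1(d\mu_0)$ is known.

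The genuine gap is the sentence defining $\FNL[u]$ ``as the pointwise monotone limit of $\FNL[P_K u]$''. The map $K \mapsto \FNL[P_K u]$ is \emph{not} monotone: $\FNL$ is quartic in $u$, not a quadratic form, so positivity of $w$ does not make it increase under projections. Already for a rank-one $w = |\phi\rangle\langle\phi|$ one has $\FNL[P_K u] = \tfrac12 \left|\langle \phi, P_K u \otimes P_K u\rangle\right|^2$, and the modulus of these partial sums need not increase with $K$. Monotonicity holds only after integration, because $(h^{-1}P_K)^{\otimes 2} \nearrow (h^{-1})^{\otimes 2}$; hence monotone convergence controls the numbers $\int \FNL[P_K u]\,d\mu_0$ but does not produce a $\mu_0$-almost-everywhere limit of the integrands. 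This is precisely the crux when $p>1$: $\mu_0$-a.e.\ $u$ lies in $\gH^{1-p}\setminus\gH$, so $u\otimes u$ has no meaning in $\gH^2$ and defining $\FNL[u]$ at all is the main content of the lemma. Two standard repairs: (i) check that $(\FNL[P_K u])_K$ is a nonnegative \emph{submartingale} for the filtration generated by the Gaussian coordinates $\langle u_j,u\rangle$ --- conditioning on the first $K$ coordinates kills the odd terms and the terms proportional to $\mathbb{E}[\langle u_{K+1},u\rangle^2]=0$, while the surviving quadratic contribution is $\geq 0$ by positivity and exchange symmetry of $w$ --- so that your uniform bound on the expectations yields almost-sure convergence, and Fatou then gives the $L^1$ bound; or (ii) show by a Wick fourth-moment computation that $w^{1/2}(P_K u)^{\otimes 2}$ is Cauchy in $L^2(\gH^{1-p},d\mu_0;\gH^2)$, which gives $L^1$-convergence of $\FNL[P_K u]$ and identifies the limit; this quantitative route is the one taken in~\cite{LewNamRou-14d}. (When $p=1$ no truncation is needed at all: $u\in\gH$ almost surely, $\FNL[u]\in[0,+\infty]$ is defined directly, and Tonelli gives your identity.)
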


Equation~\eqref{eq:int meas} is the correct interpretation of the formal definition~\eqref{eq:mu_intro}. It is this object that we derive from the bosonic  grand-canonical Gibbs state.

\subsection{Mean-field/large temperature limit}

Our main result in~\cite{LewNamRou-14d} relates, in the limit~\eqref{eq:regime}, the grand-canonical Gibbs state~\eqref{eq:GC Gibbs} to the classical Gibbs measure on one-body state defined in Lemma~\ref{lem:inter}:

\begin{theorem}[\textbf{Derivation of nonlinear Gibbs measures}]\label{thm:main2}\mbox{}\\
Under Assumptions~\ref{asum h} and~\ref{asum w} we have 
$$\frac{F_\lambda (T) - F_0 (T)}{T} \underset{T\to \infty}{\longrightarrow}  -\log Z_r $$
where $F_\lambda (T)$ is the infimum of the free-energy functional~\eqref{eq:free ener GC} and $Z_r$ the relative partition function defined in Lemma~\ref{lem:inter}. 

Let furthermore $\Gamma_{\lambda,T} ^{(k)}$ be the reduced $k$-body density matrix of $\Gamma_{\lambda,T}$. We have, 
\begin{equation}
\frac{\Gamma_{\lambda,T}^{(1)}}{T} \underset{T\to \infty}{\longrightarrow}\int_{\gH^{1-p}} |u\rangle\langle u|\,d\mu(u)
\label{eq:limit_DM_S_p}
\end{equation}
strongly in the Schatten space $\gS^p(\gH)$. In case $p=1$ in~\eqref{eq:asum trace} we also have, for any $k\geq 2$,
\begin{equation}\label{eq:result DM}
\frac{k!}{T^k}\Gamma_{\lambda,T}^{(k)} \underset{T\to \infty}{\longrightarrow} \int_\gH |u^{\otimes k}\rangle\langle u^{\otimes k}|\,d\mu(u)
\end{equation}
strongly in the trace-class.
\end{theorem}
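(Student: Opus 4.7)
The plan is to establish the free-energy convergence first and then deduce the density-matrix convergence by a Hellmann-Feynman perturbation argument. Throughout I would work on truncations $\gH_K := \mathrm{span}(u_1,\ldots,u_K)$ and only let $K \to \infty$ at the very end, because the free Gibbs measure $\mu_0$ does not live on the energy space $\gH^1$ and $\FNL[u]$ is not pointwise defined on its support.

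For the upper bound on $(F_\lambda(T)-F_0(T))/T$, I would apply the Gibbs variational principle to a trial state in Fock space built by superposing coherent states sampled from the classical interacting Gibbs measure restricted to $\gH_K$:
\begin{equation*}
\Gamma_{\mathrm{trial}}^K := \frac{1}{Z_r^K}\int_{\gH_K} |\xi(\sqrt{T}\,u)\rangle\langle\xi(\sqrt{T}\,u)|\,e^{-\FNL[u]}\,d\mu_0^K(u),
\end{equation*}
where $\xi(z)$ denotes the Fock coherent state of parameter $z$. The key computations are that $\langle\xi(\sqrt{T}u),\mathbb{H}_0\,\xi(\sqrt{T}u)\rangle = T\langle u,hu\rangle$ and $\lambda\langle\xi(\sqrt{T}u),\mathbb{W}\,\xi(\sqrt{T}u)\rangle = T\FNL[u]$ by normal-ordering. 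Combined with a careful estimate of the entropy of this trial state and the known leading asymptotics of $F_0(T)$, this should yield $\limsup_{T\to\infty}(F_\lambda-F_0)/T \leq -\log Z_r^K$, and then $K \to \infty$ via Lemma~\ref{lem:inter} closes the upper bound.

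For the matching lower bound I would invoke the quantum de Finetti theorem. On the truncation $\gH_K$, the rescaled density matrices $k!\,T^{-k}\Gamma_{\lambda,T}^{(k)}$ should, up to extraction, converge to a classical de Finetti state $\int|u^{\otimes k}\rangle\langle u^{\otimes k}|\,d\nu(u)$ for some probability measure $\nu$. Decomposing
\begin{equation*}
F_\lambda(T) - F_0(T) = \lambda\,\tr(\mathbb{W}\Gamma_{\lambda,T}) + T\,\mathcal{H}(\Gamma_{\lambda,T}\,\|\,\Gamma_{0,T}),
\end{equation*}
the first term divided by $T$ converges to $\int\FNL\,d\nu$, while the quantum relative entropy, via Berezin-Lieb inequalities applied to the finite-mode Weyl quantization, is bounded below by the classical relative entropy $\mathcal{H}_{\mathrm{cl}}(\nu\,\|\,\mu_0^K)$. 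Since $\int\FNL\,d\nu + \mathcal{H}_{\mathrm{cl}}(\nu\,\|\,\mu_0^K)$ is minimized precisely by $\mu^K$ with minimum $-\log Z_r^K$, we recover $\liminf(F_\lambda-F_0)/T \geq -\log Z_r^K$, and $K \to \infty$ concludes.

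Density-matrix convergence then follows by perturbing $h \mapsto h + \epsilon A/T^{k-1}$ for a bounded $k$-body self-adjoint operator $A$, repeating the free-energy analysis (stable under such small perturbations), and differentiating at $\epsilon = 0$: Hellmann-Feynman identifies the derivative of the limit with $\tr(A\,k!\,T^{-k}\Gamma_{\lambda,T}^{(k)})$, forcing it to converge to $\int\langle u^{\otimes k},A\,u^{\otimes k}\rangle\,d\mu(u)$. This yields weak convergence against compact operators, and strong $\gS^p$ convergence (respectively trace-class for $k \geq 2$ when $p=1$) then follows by combining weak convergence with convergence of the trace (applying the above with $A = \mathbf{1}$) and the uniform Schatten bounds from Assumption~\ref{asum h}. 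The principal obstacle is the semiclassical step on infinite-dimensional Fock space: one must control the Berezin-Lieb errors and the de Finetti extraction uniformly in the mode cutoff $K$, which is exactly where the restrictive hypothesis on $w$ in Assumption~\ref{asum w} enters---in the non-trace-class regime $p > 1$, the stronger operator bound $w \leq h^{1-p'}\otimes h^{1-p'}$ is what provides the Schatten-class compactness of $\Gamma_{\lambda,T}^{(2)}/T^2$ needed for the extracted measure to live on $\gH^{1-p}$.
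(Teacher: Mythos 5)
Your architecture is, in its main pillars, the same as that of the actual proof (which this note does not reproduce: it is deferred to~\cite{LewNamRou-14d}): the variational identity $F_\lambda(T)-F_0(T)=\lambda\tr_\cF(\mathbb{W}\Gamma_{\lambda,T})+T\,\mathcal{H}(\Gamma_{\lambda,T}\,\|\,\Gamma_{0,T})$, coherent-state trial states on a $K$-mode truncation, Berezin--Lieb inequalities, a Fock-space de Finetti theorem, and a Griffiths/Feynman--Hellmann step all appear there. But two of your steps would fail as written. First, your trial state $\Gamma_{\rm trial}^K$ puts the \emph{vacuum} on the modes $u_j$, $j>K$, since $\xi(\sqrt{T}u)$ with $u\in\gH_K$ is vacuum on $\gH_K^\perp$. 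The comparison with $F_0(T)$ then contains the term $-F_0^{>K}(T)=T\sum_{j>K}\log\bigl(1-e^{-\lambda_j/T}\bigr)^{-1}$, and for fixed $K$ each summand diverges like $\log(T/\lambda_j)$ as $T\to\infty$, so your bound is $+\infty$ rather than $-\log Z_r^K+o(1)$. The fix, as in~\cite{LewNamRou-14d}, is to factorize $\cF(\gH)\simeq\cF(\gH_K)\otimes\cF(\gH_K^\perp)$ and tensor the coherent superposition with the \emph{free Gibbs state} on the second factor; this in turn requires controlling the part of $\mathbb{W}$ coupling the two factors (this is where $w\geq 0$ and Assumption~\ref{asum w} enter), and the ``careful estimate of the entropy'' you invoke is precisely the second (upper-symbol) Berezin--Lieb inequality, using that $\Gamma_{0,T}$ has a positive P-representation with Gaussian covariances $\bigl(e^{\lambda_j/T}-1\bigr)^{-1}\approx T/\lambda_j$, with errors to be shown to vanish for fixed $K$.

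Second, the Feynman--Hellmann step for $k\geq 2$ is unstable as stated. The perturbation must be a $k$-body term (it cannot be absorbed into $h$), and on the $n$-particle sector it is of size $\epsilon\,n^k T^{1-k}$, which overwhelms the $O(n)$ one-body energy: for the unfavourable sign of $\epsilon$ (or indefinite $A$) the perturbed partition function is $+\infty$, and on the classical side $\int e^{|\epsilon|\,\|u\|^{2k}}d\mu_0=+\infty$ for $k\geq2$ because Gaussian measures only integrate quadratic exponentials. Hence the limiting free energy is not defined on a two-sided neighbourhood of $\epsilon=0$ and Griffiths' lemma cannot be applied directly; one must either restrict to $0\leq A\leq \1$ with one-sided derivatives and recover two-sided control by combining with convergence of the trace ($A=\1$), or do what~\cite{LewNamRou-14d} does: obtain weak convergence of $k!\,T^{-k}\Gamma_{\lambda,T}^{(k)}$ from the de Finetti measure, identify the limit measure with $\mu$ via the pinched variational principle and strict convexity of the classical relative entropy, and upgrade to strong trace-class convergence using $k!\,T^{-k}\tr\Gamma_{\lambda,T}^{(k)}\to\int\|u\|^{2k}d\mu$, which is finite precisely when $p=1$ --- this is why~\eqref{eq:result DM} is restricted to the trace-class case. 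A related caveat affects your $k=1$, $p>1$ argument: for merely bounded $A$ one can have $\int\langle u,Au\rangle\,d\mu=+\infty$ (indeed $\|u\|^2=+\infty$ $\mu_0$-a.s.\ when $p>1$), so the differentiation requires sufficiently smoothing (e.g.\ finite-rank, regular) perturbations, and passing from the resulting weak convergence to \emph{strong} $\gS^p$ convergence is a genuinely separate step. Finally, a small conceptual slip: the limit measure lives on $\gH^{1-p}$ because $\mu_0$ does, by~\eqref{eq:asum trace}; the role of the hypothesis~\eqref{eq:asum w 2} on $w$ is rather to make $\FNL$ $\mu_0$-integrable and to control the interaction term in the limit.
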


An obvious caveat of our approach is the case $p>1$ where the result is not as strong as one would hope. We conjecture that~\eqref{eq:result DM} continues to hold strongly in the Schatten space $\gS^p(\gH)$ also for $k>1$, but this remains an open problem. More importantly, it would be highly desirable to go beyond the stringent assumption~\eqref{eq:asum w 2}. There are however well-known obstructions to the construction of the Gibbs measure $\mu$ in this case. A minima, a Wick renormalization~\cite{GliJaf-87,Simon-74} must be performed in order to make sense of the measure when $w= w (x-y)$ is a multiplication operator. The derivation from many-body quantum mechanics of the so-defined measure is the subject of ongoing work by the authors. 

We refer to~\cite{LewNamRou-14d,Rougerie-xedp15} for additional comments. Here, let us only discuss the relevance of this result to the BEC phenomenon:
\begin{itemize}
\item First, this theorem seems to be the first giving detailed information on the limit of the thermal states of a Bose system at relatively large temperature. The result~\eqref{eq:limit_DM_S_p} clearly shows the absence of full BEC in the regime we consider. For finite dimensional bosons, things simplify a lot and versions of Theorem~\ref{thm:main2} were known before, see~\cite{Gottlieb-05} and~\cite[Appendix B]{Rougerie-LMU}.
\item The asymptotic regime~\eqref{eq:regime} should be thought of as a mean-field limit. In fact, when $p=1$,~\eqref{eq:limit_DM_S_p} indicates that the expected particle number behaves as $O(T)$, so that taking $\lambda = T ^{-1}$ corresponds to the usual mean-field scaling where the coupling constant scales as the inverse of the particle number. Roughly speaking we are thus dealing with the regime
$$ N \to \infty, \quad \lambda = \frac{1}{N}, \quad T \sim N.$$
\item Still in the case $p=1$, one should expect from a natural extrapolation of this theorem that full BEC does occur in the regime 
$$ N \to \infty, \quad \lambda = \frac{1}{N}, \quad T \ll N$$
which would indicate that the critical temperature scales in this case as the particle number. A natural idea would be to confirm this by studying the concentration of the non-linear Gibbs measure $\mu$ on the mean-field minimizer when the chemical potential is varied.
\end{itemize}

\section{The case of boltzons: derivation of a mean-field free energy functional}

We now change gears and consider, for comparison, the situation where the quantum statistics of the particles is ignored. In this case, we can work in the canonical ensemble, and this is what we shall do here. Let thus $\gH = L ^2 (\R ^d)$ and 
$$\gH ^N = \bigotimes ^N L ^2 (\R ^d) \simeq L ^2 (\R ^{dN})$$
be the $N$-body Hilbert space for $N$ distinguishable particles. We consider the $N$-body free-energy functional at temperature $T$
\begin{equation}\label{eq:bolt free ener func}
\cF_{N,T} [\Gamma_N] := \tr_{\gH ^N} \left[ H_N \Gamma_N \right] + T \tr_{\gH ^N} \left[ \Gamma_N \log \Gamma_N \right] 
\end{equation}
where, to fix ideas, we take
\begin{equation}\label{eq:bolt HN}
H_N := \sum_{j=1} ^N \left( -\Delta_j + V (x_j) \right) + \frac{\lambda}{N-1} \sum_{1\leq i < j \leq N} w_{ij}.
\end{equation}
We make standard asssumptions ensuring that this can be realized as a self-adjoint operator, e.g. that $w = w (x-y) $ is a multiplication operator by a radial function decaying at infinity $w\in L ^p + L ^{\infty}$ with $\max (1,d/2)< p <\infty$. We consider the case of a (say smooth) trapping potential $V(x) \to \infty$ when $|x| \to \infty$, so that $-\Delta + V$ has compact resolvent. 

In~\eqref{eq:bolt free ener func} $\Gamma_N$ is a $N$-body state, that is a trace-class self-adjoint operator on $\gH ^N$ with trace~$1$. For a Hamiltonian such as~\eqref{eq:bolt HN} it is well-known~\cite[Chapter~3]{LieSei-09} that the minimizer at $T= 0$ is bosonic, i.e. 
\begin{equation}\label{eq:bos sym}
 U_{\sigma} \Gamma_N = \Gamma_N U_{\sigma} = \Gamma_N 
\end{equation}
for any permutation $\sigma$ of $N$ indices, where $U_\sigma$ is the associated unitary operator on $\gH ^N$:
$$ \left( U_{\sigma} \Psi_N \right) (x_1,\ldots,x_N) = \Psi_N (x_{\sigma(1)},\ldots,x_{\sigma (N)}).$$
This is no longer true at positive temperature, and in fact we are going to see that in this case, unrestricted minimizers are \emph{not} bosonic.

Due to the symetry of $H_N$, the minimizer of~\eqref{eq:bolt free ener func}, i.e. the Gibbs state
\begin{equation}\label{eq:bolt Gibbs}
\Gamma_{N,T} = \frac{\exp \left( - T ^{-1} H_N \right)}{\tr_{\gH ^N} \left[ \exp \left( - T ^{-1} H_N \right)\right]} 
\end{equation}
satisfies however
\begin{equation}\label{eq:bolt sym}
 U_{\sigma} \Gamma_N U_{\sigma} ^* =  \Gamma_N 
\end{equation}
for any permutation, and we shall make use of this fact. 

Since we do not impose any stronger symmetry, a state of the form $\Gamma_N = \gamma ^{\otimes N}$ is admissible for any one-body state $\gamma$. Taking such an ansatz leads to a mean-field free energy functional
\begin{equation}\label{eq:bolt MF func}
 \MFf [\gamma] = \tr_{\gH} \left[ (-\Delta + V) \gamma \right] + \frac{\lambda}{2} \tr_{\gH ^2} \left[ w \, \gamma ^{\otimes 2} \right] + T \tr_{\gH} \left[ \gamma \log \gamma \right].
\end{equation}
Note that the state $\gamma ^{\otimes N}$ always satisfies~\eqref{eq:bolt sym}, but not~\eqref{eq:bos sym} (see e.g.~\cite[Proposition~3]{HudMoo-75}) unless $\gamma=|\psi\rangle \langle \psi|$ is pure. For the bosonic problem, the entropy term in the mean-field functional would thus be $0$ for any admissible trial state.

We denote $\MFe$ the infimum of $\MFf$ over one-body mixed states and $\MFgam$ the minimizer:
\begin{equation}\label{eq:bolt MF ener}
\MFe := \inf \left\{ \MFf [\gamma], \: \gamma \in \gS ^1 (\gH), \gamma = \gamma ^* ,\gamma \geq 0 , \tr_\gH \gamma = 1  \right\} = \MFf [\MFgam]. 
\end{equation}
Note that $\MFgam$ is unique by strict convexity of the functional, and that for $T\neq 0$ one can easily realize that it must be mixed, i.e not a projector, so that 
$$\tr_{\gH} \left[ \MFgam \log \MFgam \right] < 0 .$$
One can in fact show that the above mean-field functional correctly describes particles with no symmetry in a mean-field limit, with fixed temperature $T$:

\begin{theorem}[\textbf{Mean-field limit for boltzons at finite temperature}]\label{thm:boltzons}\mbox{}\\
Let $\Gamma_{N,T}$ be the unrestricted Gibbs state minimizing~\eqref{eq:bolt MF func} and 
$$ F_{N,T} = - T \log \tr_{\gH ^N} \left[ \exp \left( - T ^{-1} H_N \right)\right]$$
the associated free-energy. Define, via a partial trace\footnote{Remark the normalization: $\Gamma_{N,T} ^{(k)}$ has trace $1$.}, 
$$ \Gamma_{N,T} ^{(k)} = \tr_{k+1 \to N} \Gamma_{N,T}$$
the $k$-body density matrix of $\Gamma_{N,T}$. Then, for any fixed $\lambda$ and $T$ we have 
\begin{equation}\label{eq:bolt ener lim}
\frac{F_{N,T}}{N} \underset{N\to \infty}{\longrightarrow} \MFe
\end{equation}
and 
\begin{equation}\label{eq:bolt DM lim}
\Gamma_{N,T} ^{(k)} \underset{N\to \infty}{\longrightarrow} \MFgam ^{\otimes k}
\end{equation}
strongly in the trace-class.
\end{theorem}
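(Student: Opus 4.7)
The strategy is to match upper and lower bounds on $F_{N,T}/N$ and then identify the limiting de Finetti measure. For the upper bound, I would test $\mathcal{F}_{N,T}$ on the product ansatz $\MFgam^{\otimes N}$: the mean-field scaling $\lambda/(N-1)$ is designed so that $\tr_{\gH^N}[H_N\,\MFgam^{\otimes N}] = N\tr[h\MFgam] + \frac{\lambda N}{2}\tr[w\MFgam^{\otimes 2}]$, while additivity of the von Neumann entropy on product states gives $T\tr[\MFgam^{\otimes N}\log\MFgam^{\otimes N}] = NT\tr[\MFgam\log\MFgam]$. Summing yields $F_{N,T}\leq \mathcal{F}_{N,T}[\MFgam^{\otimes N}] = N\MFf[\MFgam] = N\MFe$.

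For the lower bound, I would use permutation invariance \eqref{eq:bolt sym} of the Gibbs state to write $F_{N,T}/N = \tr[h\,\Gamma_{N,T}^{(1)}] + \frac{\lambda}{2}\tr[w\,\Gamma_{N,T}^{(2)}] + (T/N)\tr[\Gamma_{N,T}\log\Gamma_{N,T}]$. The matching upper bound provides a priori control of the kinetic energy of $\Gamma_{N,T}^{(1)}$, giving weak compactness (up to subsequence extraction) of the reduced density matrices. A quantum de Finetti theorem for permutation-invariant (not necessarily bosonic) states then represents the limits as $\Gamma_{N,T}^{(k)}\to\int\gamma^{\otimes k}d\mu(\gamma)$ for some Borel probability measure $\mu$ on one-body density matrices; this can be derived by purifying $\Gamma_{N,T}$ to a symmetric state on an enlarged tensor product and invoking the bosonic version. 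The energy terms are handled directly through this representation. For the entropy, mere subadditivity $\tr[\Gamma_{N,T}\log\Gamma_{N,T}] \geq N\tr[\Gamma_{N,T}^{(1)}\log\Gamma_{N,T}^{(1)}]$ does not suffice because concavity gives $S(\int\gamma\,d\mu)\geq\int S(\gamma)d\mu$ in the wrong direction, so I would supplement it with the sharper estimate
\begin{equation*}
S(\Gamma_{N,T})\leq N\int S(\gamma)\,d\mu(\gamma)+o(N),
\end{equation*}
reflecting that $\Gamma_{N,T}$ is well approximated in the mean-field regime by a convex combination of product states $\gamma^{\otimes N}$ whose ranges become asymptotically orthogonal, so that the combinatorial mixing entropy is only $o(N)$.

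Combining the two ingredients, $\liminf F_{N,T}/N\geq\int\MFf[\gamma]\,d\mu\geq\MFe$, which matches the upper bound. Saturation forces $\int\MFf\,d\mu=\MFe$; strict convexity of $\MFf$ with unique minimizer $\MFgam$ then concentrates the measure, $\mu=\delta_{\MFgam}$, so that $\Gamma_{N,T}^{(k)}\to\MFgam^{\otimes k}$ weakly through the de Finetti representation. The convergence upgrades to trace norm because $\tr\Gamma_{N,T}^{(k)}=\tr\MFgam^{\otimes k}=1$ preserves total mass. I expect the main obstacle to be the sharpened entropy bound above: it cannot be extracted from two-body information alone, and establishing it requires exploiting the permutation-invariant structure of $\Gamma_{N,T}$ at all orders $k\leq N$, in the spirit of quantitative de Finetti/Diaconis--Freedman estimates.
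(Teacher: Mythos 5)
Your upper bound via the trial state $\MFgam^{\otimes N}$, the compactness step, the de Finetti representation for permutation-invariant mixed states (the purification reduction to the bosonic theorem is indeed how the St\o{}rmer--Hudson--Moody statement cited in the paper can be obtained), and the endgame --- saturation forcing $\mu=\delta_{\MFgam}$ by strict convexity, then upgrading to trace-norm convergence --- all match the paper's proof. You have also correctly located the crux: after inserting $\gamma^{(k)}=\int_{\cS}\gamma^{\otimes k}\,d\mu(\gamma)$, concavity of the entropy runs the wrong way, and what is needed is in effect
\begin{equation*}
\liminf_{N\to\infty}\ N^{-1}\tr_{\gH^N}\left[\Gamma_{N,T}\log\Gamma_{N,T}\right]\ \geq\ \int_{\cS}\tr_{\gH}\left[\gamma\log\gamma\right]d\mu(\gamma).
\end{equation*}

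At exactly that point, however, your proposal has a genuine gap: the estimate $S(\Gamma_{N,T})\leq N\int S(\gamma)\,d\mu+o(N)$ is asserted rather than proven, and the mechanism you sketch --- approximating the full $N$-body Gibbs state by a mixture of product states with asymptotically orthogonal ranges, via quantitative de Finetti/Diaconis--Freedman bounds --- is unlikely to succeed. Those quantitative theorems control only reduced density matrices of fixed order $k$ in trace norm, not the $N$-body state itself; and even if one had a trace-norm approximation of $\Gamma_{N,T}$, the von Neumann entropy is not continuous under such approximations on an infinite-dimensional space (it is merely lower semicontinuous, and finite-dimensional Fannes-type continuity bounds carry dimension factors exponential in $N$). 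The paper closes the gap by a soft argument that never touches the $N$-body state. First, subadditivity of the entropy over $\lfloor N/k\rfloor$ blocks of $k$ particles (with the remainder block of size $m<k$ controlled by positivity of a relative entropy against a fixed reference state $\gamma_0^{\otimes m}$, using the a priori kinetic bound) yields
\begin{equation*}
\liminf_{N\to\infty}\ N^{-1}\tr_{\gH^N}\left[\Gamma_{N,T}\log\Gamma_{N,T}\right]\ \geq\ \sup_{k\in\N}\frac1k\,\tr_{\gH^k}\left[\gamma^{(k)}\log\gamma^{(k)}\right].
\end{equation*}
The key input is then Lemma~\ref{lem:mean ent}, a quantum analogue of the Robinson--Ruelle theorem: the mean entropy $-\lim_k k^{-1}\tr[\gamma^{(k)}\log\gamma^{(k)}]$ along a consistent hierarchy is \emph{affine}, proved in a few lines from operator monotonicity of $\log$, the error being a harmless $-\log 2$ that vanishes after dividing by $k$. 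Affinity (not mere concavity) is precisely what allows the mean entropy to be exchanged with the de Finetti integral, giving $\sup_k k^{-1}\tr[\gamma^{(k)}\log\gamma^{(k)}]=\int_{\cS}\tr[\gamma\log\gamma]\,d\mu(\gamma)$, since the mean entropy of the product hierarchy $(\gamma^{\otimes k})_k$ is exactly $S_1[\gamma]$. Your instinct that the bound ``requires all orders $k\leq N$'' is right --- the block decomposition does use every $k$ --- but the correct tool is this structural affinity of the limiting functional, not a quantitative approximation of $\Gamma_{N,T}$ itself.
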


Results in this spirit can be found in~\cite{FanSpoVer-80,PetRagVer-89,RagWer-89,Werner-92}. A proof is presented in Appendix~\ref{sec:app} for completeness. As far as BEC is concerned, the above theorem is mainly interesting in as much as it helps emphasizing the role of Bose statistics at positive temperature\footnote{In much the same way as the study of ``bosonic atoms'' illustrates the role of the Pauli principle in true atoms~\cite{BenLie-83}.}:
\begin{itemize}
\item First, the above theorem indicates that the emergence of the non-linear Gibbs measure discussed previously truly requires Bose statistics. 
\item Since at fixed $T\neq 0$, the minimizer of~\eqref{eq:bolt MF func} is mixed, full BEC does not occur for a trapped boltzonic system in the mean-field limit $N\to \infty$, $T$ fixed. In the same regime, bosons do show BEC~\cite{Suto-03} and the temperature has in fact no effect at leading order~\cite[Section~3.2]{LewNamRou-14}.
\item Thus, for trapped boltzons, we see that the critical temperature scales as $O(1)$ when $N\to \infty$ in the mean-field regime. Only when $T\to 0$ at the same time as $N\to \infty$ does full BEC occur in the Gibbs state.
\item The occurence of BEC for boltzons in this limit is in fact a trivial question of (free) energy balance, very different from the statistical effect that leads to BEC for bosons.
\item As a consequence, one should expect that the critical temperature is much higher for bosons than for boltzons. This can be seen from comparing the above theorem and~\cite[Section~3.2]{Suto-03,LewNamRou-14}: $T^{\rm boltzons}_c \sim 1$ in the mean-field limit whereas $T^{\rm bosons}_c \gg 1$.
\item In fact, from the results of~\cite{LewNamRou-14d} we previously discussed, we see that in some situations the critical temperature for bosons could be as high as $O(N)$, much larger as that one obtains by neglecting statistics. 
\end{itemize}

\bigskip

\noindent \textbf{Acknowledgment.} The research summarized here has received funding from the \emph{European Research Council} (ERC Grant Agreement MNIQS 258023), the \emph{People Programme / Marie Curie Actions} (REA Grant Agreement 291734) and from the \emph{ANR} (Projects NoNAP ANR-10-BLAN-0101 \& Mathostaq ANR-13-JS01-0005-01).

\bigskip

\appendix

\section{Boltzons at positive temperature, proof of Theorem~\ref{thm:boltzons}}\label{sec:app}

Here we briefly sketch a proof of Theorem~\ref{thm:boltzons} inspired by the method introduced in~\cite{MesSpo-82} for classical mean-field limits. We denote 
$$ S_N [\Gamma_N] = - \tr_{\gH^N} \left[ \Gamma_N \log \Gamma_N\right]$$
the von Neumann entropy. We shall use some of its well-known properties for which we refer to~\cite{Carlen-10,OhyPet-93}. As noted in the main text, $\gamma ^{\otimes N}$ is an admissible trial state, and since 
$$ S_N \left[ \gamma ^{\otimes N}\right] = N S_1 \left[ \gamma \right]$$
we clearly get the inequality 
\begin{equation}\label{eq:bolt up bound}
 F_{N,T}\leq N \MFf [\MFgam] = \MFe 
\end{equation}
by the variational principle. 

We turn to the proof of a matching free energy lower bound. Passing to a diagonal subsequence we may assume that as $N\to \infty$ and for any $k \geq 0$, 
$$ \Gamma_{N,T} ^{(k)} \wto_* \gamma ^{(k)}$$
in the trace-class. It is easy to obtain the a priori bound 
\begin{equation}\label{eq:bolt a priori}
 \tr_{\gH ^k} \left[ \left(\sum_{j=1} ^k -\Delta_k + V (x_k)\right) \Gamma_{N,T} ^{(k)} \right] \leq C_{T,\lambda,k} 
\end{equation}
where $C_{T,\lambda,k}$ does not depend on $N$. Since $-\Delta + V$ has compact resolvent we deduce that in fact 
$$ \Gamma_{N,T} ^{(k)} \to \gamma ^{(k)}$$
strongly in the trace-class, and thus 
\begin{align}\label{eq:bolt inf state}
\tr_{\gH ^k} \gamma ^{(k)} &= 1\nonumber \\
\tr_{k+1} \gamma ^{(k+1)} &= \gamma ^{(k)}
\end{align}
where $\tr_{k+1}$ denotes a partial trace with respect to one variable. One may thus apply the quantum de Finetti theorem of St\o{}rmer-Hudson-Moody~\cite{Stormer-69,HudMoo-75} (see also~\cite[Section~3.4]{Rougerie-LMU} and~\cite[Remark~5.3]{LewNamRou-14}). This yields a unique Borel probability measure $\mu$ over the set 
$$ \cS (\gH) = \left\{ \gamma \in \gS ^1 (\gH), \gamma = \gamma ^*, \tr \gamma = 1  \right\}$$
of one-body mixed states such that, for any $k\geq 0$, 
\begin{equation}\label{eq:bolt deF}
\gamma ^{(k)}  = \int_{\cS} \gamma ^{\otimes k} d\mu (\gamma).
\end{equation}
One can easily pass to the liminf in the energy terms of~\eqref{eq:bolt MF func}, following~\cite[Section~3]{LewNamRou-14}, and obtain
\begin{align}\label{eq:bolt lim ener}
\liminf_{N\to \infty} N ^{-1} \tr_{\gH ^N} \left[ H_N \Gamma_{N,T} \right] &= \liminf_{N\to \infty} \tr_{\gH} \left[(-\Delta +V) \Gamma_{N,T} ^{(1)} \right] + \frac{\lambda}{2} \tr_{\gH^2} \left[w\, \Gamma_{N,T} ^{(2)} \right]\nonumber\\
&\geq \tr_{\gH} \left[(-\Delta +V) \gamma ^{(1)} \right] + \frac{\lambda}{2} \tr_{\gH^2} \left[w\, \gamma ^{(2)} \right]\nonumber\\
&= \int_{\cS} \left( \tr_{\gH} \left[(-\Delta +V) \gamma \right] + \frac{\lambda}{2} \tr_{\gH^2} \left[w\, \gamma ^{\otimes 2} \right]\right) d\mu (\gamma).
\end{align}
For the entropy term, we first use subadditivity of the entropy to write, for any $k\geq 0$ 
\begin{equation}\label{eq:bolt subadd}
 \tr_{\gH ^N} \left[ \Gamma_{N,T} \log \Gamma_{N,T}\right] \geq \left\lfloor \frac{N}{k} \right\rfloor \tr_{\gH ^k} \left[ \Gamma_{N,T} ^{(k)} \log \Gamma_{N,T} ^{(k)} \right] + \tr_{\gH ^m} \left[ \Gamma_{N,T} ^{(m)} \log \Gamma_{N,T} ^{(m)} \right] 
\end{equation}
where  (Euclidean division) 
$$N = k \left\lfloor \frac{N}{k} \right\rfloor + m \mbox{ with } m<k. $$
Next, setting
$$ \gamma_0 = c_0 \exp( -\Delta + V)$$
with $c_0$ a normalization constant we use the positivity of the relative entropy to get
\begin{align*}
 \tr_{\gH ^m} \left[ \Gamma_{N,T} ^{(m)} \log \Gamma_{N,T} ^{(m)} \right] &= \tr_{\gH ^m} \left[ \Gamma_{N,T} ^{(m)} \left( \log \Gamma_{N,T} ^{(m)} -\gamma_0 ^{\otimes m}\right)\right] + \tr_{\gH ^m} \left[ \Gamma_{N,T} ^{(m)} \log \gamma_0 ^{\otimes m} \right] \\
 &\geq \tr_{\gH ^m} \left[ \Gamma_{N,T} ^{(m)} \log \gamma_0 ^{\otimes m} \right]. 
\end{align*}
Using~\eqref{eq:bolt a priori} we find that this is bounded below independently of $N$. Thus, returning to~\eqref{eq:bolt subadd}, one can pass first to the liminf in $N$ and then to the sup in $k$ to get
\begin{equation}\label{eq:bolt lim ent}
 \liminf_{N\to \infty} N ^{-1} \tr_{\gH ^N} \left[ \Gamma_{N,T} \log \Gamma_{N,T} \right] \geq \sup_{k\in \N} \frac{1}{k}  \tr_{\gH ^k} \left[ \gamma ^{(k)} \log \gamma ^{(k)} \right]. 
\end{equation}
We then invoke a quantum analogue of a result of~\cite{RobRue-67}:

\begin{lemma}[\textbf{The mean entropy is affine}]\label{lem:mean ent}\mbox{}\\
Let $\left(\gamma ^{(k)}\right)_{k\in \N}$ be a sequence of $k$-body states satisfying~\eqref{eq:bolt inf state}. Then the mean entropy defined by 
$$ -\sup_{k\in \N} \frac{1}{k}  \tr_{\gH ^k} \left[ \gamma ^{(k)} \log \gamma ^{(k)} \right] = -\lim_{k\to \infty} \frac{1}{k}  \tr_{\gH ^k} \left[ \gamma ^{(k)} \log \gamma ^{(k)} \right]$$
is an affine function of $\left(\gamma ^{(k)}\right)_{k\in \N}$.
\end{lemma}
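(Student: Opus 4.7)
The plan is to establish two claims: first, that the supremum in the definition of mean entropy is actually the limit as $k\to\infty$, and second, that this limit depends affinely on the sequence $(\gamma^{(k)})_{k\in \N}$. Throughout I will write $S(\rho):=-\tr[\rho\log \rho]$ for the von Neumann entropy and use the compatibility assumption $\tr_{k+1}\gamma^{(k+1)}=\gamma^{(k)}$ to identify $\gamma^{(k)}$ and $\gamma^{(l)}$ as partial traces of $\gamma^{(k+l)}$ over complementary sets of variables.

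First, existence of the limit. Subadditivity of the von Neumann entropy, applied to $\gamma^{(k+l)}\in \gS^1(\gH^{k+l})$ with its two reductions to $\gH^k$ and $\gH^l$, gives
\begin{equation*}
S(\gamma^{(k+l)})\leq S(\gamma^{(k)})+S(\gamma^{(l)}),
\end{equation*}
so the sequence $s_k:=\tr_{\gH^k}[\gamma^{(k)}\log\gamma^{(k)}]=-S(\gamma^{(k)})$ is super-additive: $s_{k+l}\geq s_k+s_l$. Since each $\gamma^{(k)}$ is a state, $s_k\leq 0$, so $\sup_{k}s_k/k$ belongs to $(-\infty,0]$. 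Fekete's lemma applied to super-additive sequences then yields $\lim_{k\to\infty}s_k/k=\sup_k s_k/k$, which is the first identity to prove.

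Next, affinity. Let $(\gamma_1^{(k)})$ and $(\gamma_2^{(k)})$ both satisfy~\eqref{eq:bolt inf state} and let $\lambda\in[0,1]$. By linearity of the partial trace, the convex combination $\gamma^{(k)}:=\lambda\gamma_1^{(k)}+(1-\lambda)\gamma_2^{(k)}$ again satisfies the compatibility conditions, so its mean entropy $s(\gamma):=\lim_k k^{-1}S(\gamma^{(k)})$ is well-defined by the first step. Concavity of the von Neumann entropy gives
\begin{equation*}
S(\gamma^{(k)})\geq \lambda\, S(\gamma_1^{(k)}) + (1-\lambda)\, S(\gamma_2^{(k)}),
\end{equation*}
and dividing by $k$ and passing to the limit yields $s(\gamma)\geq \lambda\, s(\gamma_1)+(1-\lambda)\, s(\gamma_2)$. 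The matching upper bound comes from the standard mixing inequality
\begin{equation*}
S(\lambda \rho_1+(1-\lambda)\rho_2)\leq \lambda\, S(\rho_1)+(1-\lambda)\, S(\rho_2) - \lambda\log\lambda-(1-\lambda)\log(1-\lambda),
\end{equation*}
valid for any two states $\rho_1,\rho_2$ on the same space (see e.g.~\cite{Carlen-10,OhyPet-93}). Applied to $\rho_i=\gamma_i^{(k)}$, dividing by $k$, and letting $k\to\infty$, the binary-entropy remainder is $O(1/k)$ and disappears, so $s(\gamma)\leq \lambda\, s(\gamma_1)+(1-\lambda)\, s(\gamma_2)$, yielding equality.

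The main work is conceptual rather than technical: one has to recognize that the compatibility condition is precisely what turns the reductions of $\gamma^{(k+l)}$ into the partial traces to which subadditivity and the mixing inequality apply. The convergence $\lim=\sup$ is then Fekete, and affinity reduces to the two standard bounds on $S$ evaluated on convex combinations, the crucial point being that the mixing remainder is dimension-independent and thus washes out in the $1/k$ average.
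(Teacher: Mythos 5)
Your proof is correct and takes essentially the same route as the paper's: identification of the $\sup$ with the limit via subadditivity of the entropy (Fekete's lemma for the superadditive sequence $-S(\gamma^{(k)})$), concavity of $S$ for one inequality, and an almost-convexity upper bound with a $k$-independent $O(1)$ remainder that is killed by the division by $k$. The only difference is presentational: where you cite the standard mixing inequality $S(\lambda\rho_1+(1-\lambda)\rho_2)\leq \lambda S(\rho_1)+(1-\lambda)S(\rho_2)-\lambda\log\lambda-(1-\lambda)\log(1-\lambda)$ for general $\lambda$, the paper rederives precisely this bound in the midpoint case $\lambda=1/2$ by hand, using operator monotonicity of $\log$ to produce the $-\log 2$ error term.
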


\begin{proof}
That the limit exists and co\"incides with the sup is a consequence of subadditivity of the entropy. Take now two sequences $\left(\gamma_1 ^{(k)}\right)_{k\in \N}$, $\left(\gamma_2 ^{(k)}\right)_{k\in \N}$ and define 
$$ \gamma ^{(k)} = \frac{1}{2} \gamma_1 ^{(k)} + \frac{1}{2} \gamma_2 ^{(k)}.$$
We clearly have, by concavity of the entropy 
$$  \tr_{\gH ^k} \left[ \gamma ^{(k)} \log \gamma ^{(k)} \right] \leq \frac{1}{2 }  \tr_{\gH ^k} \left[ \gamma_1 ^{(k)} \log \gamma_1 ^{(k)} \right] + \frac{1}{2 }  \tr_{\gH ^k} \left[ \gamma_2 ^{(k)} \log \gamma_2 ^{(k)} \right].$$
On the other hand, the $\log$ being operator monotone, we get
\begin{align*}
\tr_{\gH ^k} \left[ \gamma ^{(k)} \log \gamma ^{(k)} \right] &= \frac{1}{2}  \tr_{\gH ^k} \left[ \left( \gamma_1 ^{(k)}\right) ^{1/2} \log \gamma ^{(k)} \left( \gamma_1 ^{(k)}\right) ^{1/2} \right] 
\\&\quad \quad \quad + \frac{1}{2}  \tr_{\gH ^k} \left[ \left( \gamma_2 ^{(k)}\right) ^{1/2} \log \gamma ^{(k)} \left( \gamma_2 ^{(k)}\right) ^{1/2}\right] \\
&\geq \frac{1}{2}  \tr_{\gH ^k} \left[ \left( \gamma_1 ^{(k)}\right) ^{1/2} \log \left( \frac{1}{2} \gamma_1 ^{(k)} \right) \left( \gamma_1 ^{(k)}\right) ^{1/2} \right] 
\\&\quad \quad \quad + \frac{1}{2}  \tr_{\gH ^k} \left[ \left( \gamma_2 ^{(k)}\right) ^{1/2} \log \left( \frac{1}{2} \gamma_2 ^{(k)} \right) \left( \gamma_2 ^{(k)}\right) ^{1/2}\right]\\
&= \frac{1}{2}  \tr_{\gH ^k} \left[ \gamma_1 ^{(k)} \log \gamma_1 ^{(k)} \right] + \frac{1}{2}  \tr_{\gH ^k} \left[  \gamma_2 ^{(k)}\log  \gamma_2 ^{(k)} \right] - \log 2.
\end{align*}
Dividing the previous inequalities by $k$ and passing to the limit gives upper and lower bounds establishing that the limit functional is affine.
\end{proof}

Combining~\eqref{eq:bolt lim ener},~\eqref{eq:bolt lim ent}, inserting the de Finetti representation~\eqref{eq:bolt deF} and using the above lemma we get
$$ \liminf_{N\to \infty} N ^{-1} \cF_{N,T} [\Gamma_{N,T}] \geq \int_{\cS} \MFf [\gamma] d\mu(\gamma) \geq \MFe$$
because $\mu$ is a probability measure. This is the desired lower bound and thus~\eqref{eq:bolt ener lim} is proved. The convergence of states~\eqref{eq:bolt DM lim} follows by combining with~\eqref{eq:bolt up bound}: the measure $\mu$ must be concentrated on the unique minimizer of $\MFf$.

\hfill\qed



\end{document}